\newenvironment{proof}[1][Proof]{\begin{trivlist}
\item[\hskip \labelsep {\bfseries #1}]}{\end{trivlist}}
\newcommand{\qed}{\nobreak \ifvmode \relax \else
\ifdim\lastskip<1.5em \hskip-\lastskip
\hskip1.5em plus0em minus0.5em \fi \nobreak
\vrule height0.75em width0.5em depth0.25em\fi}
\newtheorem{prop}{Proposition}
\def\({\left(}
\def\){\right)}
\def\[{\left[}
\def\]{\right]}
\def\SIR{\textsf{SIR}}
\def\ASE{\textsf{ASE}}
\begin{document}
\graphicspath{{figures/}}
%
\title{On the Asymptotic Behavior of Ultra-Densification under a Bounded Dual-Slope Path Loss Model}


\author{\IEEEauthorblockN{Yanpeng Yang, Jihong Park\IEEEauthorrefmark{2} and Ki Won Sung}
\IEEEauthorblockA{KTH Royal Institute of Technology, Wireless@KTH, Stockholm, Sweden\\
 E-mail: yanpeng@kth.se, sungkw@kth.se}
\IEEEauthorblockA{\IEEEauthorrefmark{2}Dept. of Electronic Systems, Aalborg University, Denmark\\
E-mail: jihong@es.aau.dk}
}

\maketitle

\begin{abstract}
In this paper, we investigate the impact of network densification on the performance in terms of downlink signal-to-interference (SIR) coverage probability and network area spectral efficiency (ASE). A sophisticated bounded dual-slope path loss model and practical user equipment (UE) densities are incorporated in the analysis, which have never been jointly considered before. By using stochastic geometry, we derive an integral expression along with closed-form bounds of the coverage probability and ASE, validated by simulation results. Through these, we provide the asymptotic behavior of ultra-densification. The coverage probability and ASE have non-zero convergence in asymptotic regions unless UE density goes to infinity (full load). Meanwhile, the effect of UE density on the coverage probability is analyzed. The coverage probability will reveal an U-shape for large UE densities due to interference fall into the near-field, but it will keep increasing for low UE densites. Furthermore, our results indicate that the performance is overestimated without applying the bounded dual-slope path loss model. The derived expressions and results in this work pave the way for future network provisioning.
\end{abstract}

\begin{IEEEkeywords}
Network densification, bounded path loss model, dual-slope path loss model, stochastic geometry 
\end{IEEEkeywords}


%
\IEEEpeerreviewmaketitle

\section{Introduction}

Network densification is considered as a key enabler to cope with the upcoming 5G data tsunami \cite{Bhushan}\cite{Kamel1}. Deploying more base stations (BSs) can rapidly increase the network capacity by shortening the BS and user equipment (UE) association distance as well as by reducing per-cell traffic load. As densification goes on, BS density may easily exceed UE density, forming an ultra dense network (UDN) \cite{Lopez}. Its simplest example can be off-peak traffic hours under dense BS deployment. Peak hours can also be suitable cases since average BS load in practice is only 20\% due to network stability \cite{Orange:14}. In the UDN, a large number of UE-void BSs within their coverages emerge, and the overall network transits from being fully loaded to partially loaded in which not all BSs are \emph{active} (i.e., transmitting signals to serve the UEs within their cells). Such a UDN may evaporate the advantage of densification since there is less than one UE per cell on average. Therefore, it is crucial to understand the asymptotic behavior of ultra densification for the purpose of network deployment. 

The pioneering work \cite{Andrews} provides a comprehensive understanding on the impact of BS density in a fully loaded downlink cellular network with a simple single-slope unbounded path loss model. Illustrated in Fig.~1 as a baseline, it concludes that BS densification does not change the signal-to-interference ratio ($\SIR$) of an individual UE, but linearly improves the area spectral efficiency ($\ASE$) defined as sum rate per unit area. As BS density grows, the desired signal and interference growths cancel each other, leading to such result. However, it is difficult to apply this conclusion to UDNs due to its simplified signal propagation and load models \cite{Ding2}\cite{Liu}. 

\begin{figure}[t]
\hspace{-.3cm}\includegraphics[width=.48\textwidth]{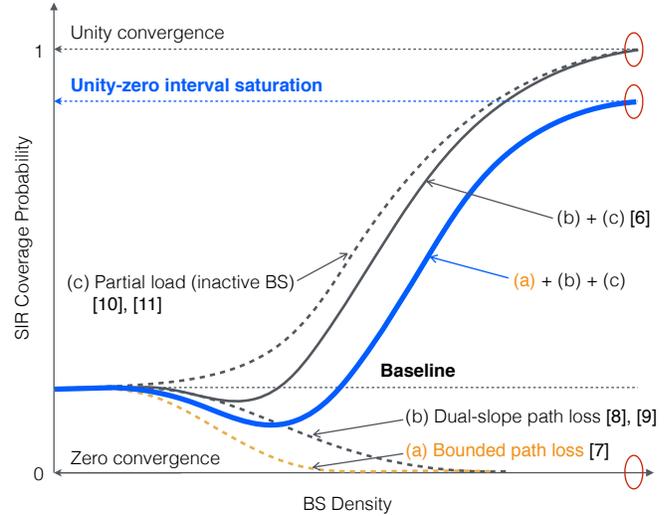}
\caption{Impacts of $(a)$ bounded path loss, $(b)$ dual-slope path loss, and $(c)$ partial load models on asymptotic {\SIR} coverage probability as BS density increases.\protect\footnotemark }
\label{fig:Intro}
\end{figure}

\footnotetext{Minor modification is applied for \cite{Ding} that considers signal-to-interference-plus-noise ratio instead of $\SIR$.}

In a UDN where BS-UE distance $d$ shrinks, a simple unbounded path loss model $d^{-\alpha}$ for the path loss exponent $\alpha>2$ may amplify the received signals when $d<1$, which is unrealistic. In addition, a large amount of signals in a UDN are transmitted from the near-field of a receiver, and these signals experience less attenuation owing to sparse shadowing, i.e. near-field path loss exponent $\alpha_c < \alpha$. A simple single-slope path loss model cannot capture such a distinction. Furthermore, a full load model forces BSs to always transmit signals despite non-negligible portion of UE-void BSs, overestimating interference. It is thus important to incorporate such propagation and load characteristics in detail so as to examine the impact of ultra densification. For this end our system model considers the tri-fold aspect: \emph{$(a)$ bounded path loss, $(b)$ dual-slope path loss and $(c)$ partial load models}, as illustrated in Figs. \ref{fig:Intro} and \ref{fig:system}.

In preceding works, the impacts of $(a)$ and $(b)$ on $\SIR$ coverage probability, defined as $\Pr\(\SIR>t\)$ for a target threshold $t>0$, are respectively investigated by \cite{Liu} and \cite{Zhang,MariosUDN:16}. Both models leads to a conclusion that $\SIR$ coverage probability asymptotically converges to $0$ (i.e. $\SIR \rightarrow 0$) as BS density increases.\begin{footnote}{Near-field environment in \cite{MariosUDN:16} is confined to $\alpha_c<\alpha$ in this article, which originally considers a general $\alpha_c>2$.}\end{footnote}  The reason is as follows: as BS density grows, interference keeps increasing while the desired signal increase is saturated under (a); the number of near-field interferer increases under (b), dominating the increase in the desired signal from `a single' BS. On the other hand, the impact of $(c)$ is clarified in \cite{Park2, Yang}, showing $\SIR$ coverage probability asymptotically converges to $1$, i.e. $\SIR\rightarrow \infty$. It comes from the fact that UEs' neighboring BSs under a nearest association rule are always active while the rest of BSs become inactive. This results in making interfering BS density converge to UE density while the desired signal keeps increasing. A recent work \cite{Ding} combines both $(b)$ and $(c)$, and interestingly concludes that $\SIR$ coverage probability still converges to $1$ asymptotically since $(c)$ dominates $(b)$.

Motivated by the discussions, we aim to combine $(a)$, $(b)$, and $(c)$ altogether, and investigate their aggregate impact on the asymptotic $\SIR$ coverage behavior. The main contributions of this paper are listed below.
\begin{itemize}
\item Asymptotic unity-zero interval $\SIR$ coverage saturation is derived, which also leads to the same ASE saturation. This verifies combining $(a)$ and $(b)$ exactly cancels out $(c)$ (Prop. \ref{prop:converge} and \ref{prop:ASE}).
\item Numerically tractable integral-form of coverage probability under a bounded dual-slope path loss model are derived (\ref{prop:Pc}). Moreover, closed-form bounds of coverage probability and ASE are provided (Prop. \ref{prop:bounds} and \ref{prop:trend}).
\item The impact of UE density on coverage probability and ASE are analyzed (Fig. \ref{fig:PcUE}). Meanwhile, the trend of coverage probability and ASE are interpreted  (Fig. \ref{fig:Pc} and Fig. \ref{fig:ASE}). The scaling trend of ASE in terms of BS density is derived (Prop. \ref{prop:trend}).
\end{itemize}

\section{System Model}

\begin{figure}[t]
\includegraphics[width=.5\textwidth]{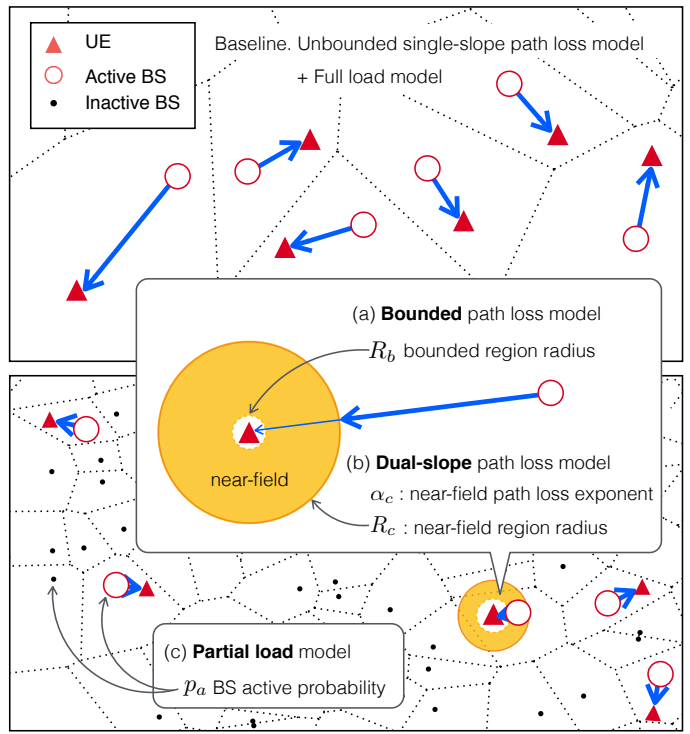}
\caption{Network layout of traditional fully loaded network (top) and partially loaded UDN (bottom) as well as illustration of the bounded dual-slope path loss model (middle).}
\label{fig:system}
\end{figure}

We consider a downlink cellular network where BSs and UEs are distributed according to two independent homogeneous Poisson Point Processes (PPPs) $\Phi_{b}$ and $\Phi_{u}$. The densities of BS and UE are denoted as $\lambda_{b}$ and $\lambda_{u}$ respectively. We assume each UE is associated with its closest BS whose coverage area comprises a Voronoi tessellation as shown in Fig. \ref{fig:system}. Each BS becomes inactive without transmitting any signal when its coverage area, the Voronoi cell, is empty of active UEs. Correspondingly, each active BS has at least one UE in its cell and will randomly choose one of them to serve. According to \cite{Yu}, the probability that a BS becomes active is given as
\begin{align}
p_a \approx 1-\(1+\frac{\lambda_u}{3.5\lambda_b}\)^{-3.5}
\end{align}
which allows us to incorporate the partial load model.

Both BS and UE are equipped with a single antenna and BSs transmit with unit power. Rayleigh fading is used to model the channel gain, with the fading coefficients $h$ are i.i.d zero mean unit variance complex normal distributed random variables. Since we will focus on the asymptotic behavior and the system is interference-limited in dense networks, we will neglect the noise power and examine $\SIR$ throughout the paper.

We consider that path loss attenuation from a BS to a UE is distinguished under two different regions, \emph{near-field} and $\emph{far-field}$ at the UE as illustrated in Fig. \ref{fig:system}. In a near-field within radius ${R_c}$, a transmitted signal experiences less absorption and diffraction, so the near-field path loss exponent $\alpha_c$ becomes less than the far-field exponent $\alpha>2$. In the innermost near field, the transmitted signal becomes no longer attenuated within radius $R_b$ because of the physical volume of the UE. This dual-slope path loss $\ell(\alpha, \alpha_{c}, d)$ can be formulated as a piece-wise function of the propagation distance $d$, shown as
\begin{align}
\label{equ:pathloss}
\ell(\alpha, \alpha_{c}, d)=\left\{
        \begin{array}{ll}
          1, & 0 \leq d \leq {R_{b}}; \\
          d^{-\alpha_{c}}, &  {R_{b}} < d \leq {R_{c}}; \\
          \tau d^{-\alpha}, &  d > {R_{c}}
        \end{array}
      \right.
\end{align}
where ${R_{b}} > 0$ is the radius of bounded path loss region, i.e. the path loss in the range of $[0,{R_b}]$ is assumed constant; $\tau \triangleq {R_{c}}^{\alpha-\alpha_{c}}$; ${R_{c}}  \geq {R_{b}}$ is the critical distance to divide the near- and far-field; and $\alpha_{c}$ and $\alpha$ are the near- and far-field path loss exponents for $2  < \alpha_{c} < \alpha$, respectively.

\begin{figure*}[!b]
\vspace*{4pt}
\hrulefill
\normalsize

\setcounter{equation}{5}
{\small \begin{align}
\label{equ:pcbdm}
P_{c}(\lambda_b,\lambda_u,T) &= \lambda_b \pi \( \int_0^{{R_{b}}^2}e^{-\lambda_{b}\pi r(1+p_{a}G_1(r,T))} \mathrm{d}r + \int_{{R_{b}}^2}^{{R_{c}}^2} e^{-\lambda_{b}\pi r(1+p_{a}G_2(r,T))} \mathrm{d}r + \int_{{R_{c}}^2}^{\infty}e^{-\lambda_{b}\pi r(1+p_{a}G_3(T))} \mathrm{d}r \) \\
\label{equ:G1}
G_1(r,T) &= c_T(\alpha, \alpha_c, {R_c}, {R_b})r^{-1} -T(1+T)^{-1}\\
\label{equ:G2}
G_2(r,T) &=  \[c_T\(\alpha, \alpha_c, {R_c}, \sqrt{r} \) +  {{R_{b}}}^2 \(F\(\frac{2}{\alpha},\frac{1}{T}\)-\frac{T}{1+T}\) \]r^{-1}  - F\(\frac{2}{\alpha},T^{-1}\)\\
\label{equ:G3}
G_3(T) &= T \(\frac{\alpha_c}{2}-1 \)^{-1} F\( 1-\frac{2}{\alpha_c}, T\) \\
c_T(\alpha, \alpha_c, {R_c}, x) &:= {{R_c}}^2 F\(\frac{2}{\alpha},  \frac{1}{T}\[\frac{{R_c}}{x}\]^\alpha\) - {{R_{b}}}^2 \[F\(\frac{2}{\alpha},\frac{1}{T}\)-\frac{T}{1+T}\] + \frac{2 T {R_{b}}^{\alpha} {R_{c}}^{2-\alpha}   }{\alpha_c-2} F\( 1-\frac{2}{\alpha_c}, T\[\frac{x}{{R_c}}\]^\alpha\)
\end{align}
\small \begin{align}
\label{equ:PcLB}
\setcounter{equation}{\value{equation}}
\hspace{-5pt} P_{c}^{\text{LB}}&=\frac{1}{H_1}\(e^{-\lambda_b  p_{a} \pi c_T(\alpha, \alpha_c, {R_c}, {R_b})} - e^{-\lambda_b \pi \({R_b}^2 H_1  + p_{a}c_T(\alpha, \alpha_c, {R_c}, {R_b})\)}\) + \frac{1}{H_{2l}}\(e^{-\lambda_b \pi {R_b}^2 H_{2l}} - e^{-\lambda_b \pi  {R_c}^2 H_{2l}}\) + \frac{1}{H_3}\(e^{-\lambda_b \pi  {R_c}^2 H_3 }\) \\
\label{equ:PcUB}
P_{c}^{\text{UB}}&=\frac{1}{H_1}\(e^{-\lambda_b  p_{a} \pi c_T(\alpha, \alpha_c, {R_c}, {R_b})} - e^{-\lambda_b \pi \({R_b}^2 H_1  + p_{a}c_T(\alpha, \alpha_c, {R_c}, {R_b})\)}\) + \frac{1}{H_{2u}}\(e^{-\lambda_b \pi {R_b}^2 H_{2u}} - e^{-\lambda_b \pi  {R_c}^2 H_{2u}}\) + \frac{1}{H_3}\(e^{-\lambda_b \pi  {R_c}^2 H_3 }\)
\end{align}
}

\end{figure*}

\setcounter{equation}{2}
\subsection{Performance Metrics} 
In this paper, we will focus on two performance metrics from both user and network perspectives: $\SIR$ coverage probability of a typical UE and the $\ASE$ of the network. We analyze the performance of a typical user located at the origin $o$ randomly selected by the BS, which is permissible in a homogeneous PPP by Slivnyak's theorem \cite{Stoyan}. The SIR of a typical user denoted as 0 can be expressed as
\begin{align}
\label{equ:SIR}
\mathrm{SIR}_0=\frac{|h_{0,0}|^2\ell(d_{0,0})}{\displaystyle\sum_{i \in \Phi_{b}^*\backslash \{0\}}|h_{i,0}|^2\ell(d_{i,0})}.
\end{align}
where $d_{i,j}$ and $h_{i,j}$ denote the distance and channel between BS $i$ and UE $j$, $|h_{i,j}|^2 \sim \mathrm{exp}(1)$. $\Phi_{b}^*$ represents the set of active BSs which is not a homogeneous PPP. Nevertheless, we can assume  $\Phi_{b}^*$ as a homogeneous PPP with density $\lambda_b^*=\lambda_b p_a$, which has been shown to be accurate according to \cite{Lee}\cite{Li2}. 
Given the downlink SIR of the typical user, the coverage probability is defined as:
\begin{align}
P_c(\lambda_b, \lambda_u, T) \triangleq \mathbb{P}[\mathrm{SIR}_0>T]
\end{align}
where T is the target $\SIR$ level.

The network $\ASE$ $\Gamma$ is defined as the sum average spectral efficiency of all active BSs achieving the target threshold in a unit area \cite{Li2} and is given by 
\begin{align}
\label{equ:ASE}
\Gamma(\lambda_b, \lambda_u, T) \triangleq p_a \lambda_b   P_c(\lambda_b, \lambda_u, T) \log_2(1+T)
\end{align}
where $p_a\lambda_b P_c$ can be interpreted as the density of the BSs that successfully transmit the symbols to their users.

\section{Coverage Probability and ASE analysis}
\label{sec:FiniteUE}
In this section, we derive the coverage probability and ASE expressions under a bounded dual-slope path loss model and provide closed-form bounds of them. Furthermore, we demonstrate the convergence of them in asymptotic regions where $\lambda_b \to \infty$. 

%

\begin{prop}
\label{prop:Pc}(Coverage probability expression)
In a cellular network with BS active probability $p_a$, the coverage probability under a bounded dual-slope path loss model is expressed in (\ref{equ:pcbdm}) at the bottom of this page, where the supplementary equations are listed in (7)-(10) and $F(b,z)=\,_2F_1(1,b,1+b,-z)$ with $\,_2F_1(a,b,c,z)$ being the Gauss hypergeometric function.
\end{prop}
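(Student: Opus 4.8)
The plan is to follow the standard conditioning-on-serving-distance argument and then specialise every radial integral to the three regimes of the piecewise path loss in (\ref{equ:pathloss}). First I would condition on the distance $r=d_{0,0}$ to the serving (nearest) BS: since that BS is the nearest point of $\Phi_b$, the void probability of the PPP gives the contact-distance density $f_r(r)=2\pi\lambda_b r\,e^{-\pi\lambda_b r^2}$, and every interfering BS lies outside the disk of radius $r$. Because $|h_{0,0}|^2\sim\exp(1)$, the conditional coverage probability collapses to a Laplace transform of the aggregate interference $I$,
\begin{equation*}
\mathbb{P}[\SIR_0>T\mid r]=\mathbb{P}\!\left[|h_{0,0}|^2>\tfrac{T}{\ell(r)}\,I \,\Big|\, r\right]=\mathcal{L}_I\!\left(\tfrac{T}{\ell(r)}\right).
\end{equation*}

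Next I would evaluate $\mathcal{L}_I$. Treating the active set $\Phi_b^*$ as an independent homogeneous PPP of density $\lambda_b p_a$ (the approximation adopted in Section~II) and averaging the i.i.d.\ $\exp(1)$ interfering gains, the probability generating functional of the PPP yields
\begin{equation*}
\mathcal{L}_I(s)=\exp\!\left(-2\pi\lambda_b p_a\int_r^\infty \frac{s\,\ell(x)}{1+s\,\ell(x)}\,x\,\mathrm{d}x\right).
\end{equation*}
Setting $s=T/\ell(r)$, multiplying by $f_r(r)$, integrating over $r$, and applying the change of variable $r\mapsto r^2$ (so that $2\pi\lambda_b r\,\mathrm{d}r\to\pi\lambda_b\,\mathrm{d}r$ and the void factor becomes $e^{-\pi\lambda_b r}$) produces the prefactor $\lambda_b\pi$ and an exponent of the form $-\pi\lambda_b r\,(1+p_a G(r,T))$, with $G$ the normalised interference integral.

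The core of the derivation is then a double piecewise split. The outer integral over the serving distance breaks at $R_b^2$ and $R_c^2$ according to which regime $r$ occupies, producing the three terms carrying $G_1,G_2,G_3$; simultaneously, within each such regime the inner integral over the interferer distance splits into a flat part, a near-field part (exponent $\alpha_c$) and a far-field part (exponent $\alpha$). The flat part integrates elementarily and supplies the $T/(1+T)$ terms. For every power-law part, $\int \frac{T x^{-\beta}}{1+T x^{-\beta}}\,\mathrm{d}x$ with $\beta\in\{\alpha_c,\alpha\}$, I would invoke the Euler integral representation of the Gauss hypergeometric function, $F(b,z)={}_2F_1(1,b;1+b;-z)=b\int_0^1 t^{b-1}(1+zt)^{-1}\,\mathrm{d}t$, which converts each near- and far-field integral into an $F(\cdot,\cdot)$ value and assembles the auxiliary function $c_T$ in (10). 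Collecting the pieces regime by regime and separating the serving-distance-independent terms from the $r^{-1}$ terms then gives (\ref{equ:G1})--(\ref{equ:G3}).

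The main obstacle I anticipate is the bookkeeping of this decomposition rather than any single quadrature: the three outer regimes each generate a different subset of inner sub-integrals, and the hypergeometric antiderivatives must be evaluated at the moving regime boundaries so that the continuity/boundary corrections --- for example the $-R_b^2\big(F(2/\alpha,1/T)-T/(1+T)\big)$ offset that appears both inside $c_T$ and in $G_2$ --- combine consistently across the regimes. Verifying that these offsets collapse into the compact closed forms of (\ref{equ:G1})--(\ref{equ:G3}) and (10), not the evaluation of any individual integral, is where the real care is required.
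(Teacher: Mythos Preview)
Your proposal is correct and follows essentially the same route as the paper: condition on the nearest-BS distance, reduce via the exponential fading to the Laplace transform of the interference, apply the PGFL of the thinned PPP, and then change variables $r\mapsto r^2$ to obtain the general-path-loss integral before specialising to the three regimes of (\ref{equ:pathloss}). In fact the paper's own proof stops at the general expression and simply says ``substitute our bounded dual-slope model,'' so your explicit description of the double piecewise split and the hypergeometric identification is more detailed than what appears there.
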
 

\begin{proof}
See Appendix \ref{sec:P1}.
\end{proof}

Despite the complicated form of (\ref{equ:pcbdm}), the first and third integrals can be calculated into exponential expressions. In this case, by applying transforms to the second integral, we can derive closed-form bounds of (\ref{equ:pcbdm}) with only exponential and hypergeometric functions as shown in the following proposition.

\setcounter{equation}{12}
\begin{prop} \label{prop:bounds}(Coverage probability bounds) $\SIR$ coverage probability's lower bound $P_c^{\text{LB}}$ and upper bound $P_c^\text{UB}$ are given as (\ref{equ:PcLB}) and (\ref{equ:PcUB}) at the bottom of this page, where $H_1=1-p_{a}\frac{T}{1+T}$, $H_{2l}=1+p_{a}G_2(R_b^2)$, $H_{2u}=1+p_{a}G_2(R_c^2)$, $H_3=1+p_{a}G_3(T)$.

\begin{proof}
See Appendix \ref{sec:P2}.
\end{proof}
\end{prop}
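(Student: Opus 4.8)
The plan is to treat the three integrals in (\ref{equ:pcbdm}) separately: the first and third can be evaluated exactly, while the second is sandwiched between two closed-form expressions obtained by freezing its $r$-dependent exponent at the endpoints of $[R_b^2,R_c^2]$.

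First I would dispatch the first integral. Substituting $G_1(r,T)=c_T(\alpha,\alpha_c,R_c,R_b)\,r^{-1}-T(1+T)^{-1}$, the factor $r$ cancels the $r^{-1}$ attached to $c_T$, so that $r\bigl(1+p_aG_1(r,T)\bigr)=rH_1+p_a c_T(\alpha,\alpha_c,R_c,R_b)$ with $H_1=1-p_aT(1+T)^{-1}$. The integrand collapses to $e^{-\lambda_b\pi p_a c_T}e^{-\lambda_b\pi rH_1}$, a pure exponential in $r$, and integrating over $[0,R_b^2]$ yields exactly the first term of both (\ref{equ:PcLB}) and (\ref{equ:PcUB}); this term is therefore exact. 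The third integral is simpler still: $G_3(T)$ is independent of $r$, so $1+p_aG_3(T)=H_3$ is constant and $\lambda_b\pi\int_{R_c^2}^{\infty}e^{-\lambda_b\pi rH_3}\,\mathrm{d}r=H_3^{-1}e^{-\lambda_b\pi R_c^2H_3}$, reproducing the (again exact) third term of both bounds.

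The second integral is the obstruction, since $G_2(r,T)$ depends on $r$ through hypergeometric factors and admits no elementary antiderivative. Here I would first simplify: the $-R_b^2[F(2/\alpha,1/T)-T(1+T)^{-1}]$ inside $c_T(\alpha,\alpha_c,R_c,\sqrt r)$ cancels the additive $+R_b^2[\cdot]$ in (\ref{equ:G2}), leaving $G_2(r,T)=\bigl[R_c^2F\bigl(2/\alpha,T^{-1}(R_c^2/r)^{\alpha/2}\bigr)+\tfrac{2TR_b^{\alpha}R_c^{2-\alpha}}{\alpha_c-2}F\bigl(1-2/\alpha_c,T(r/R_c^2)^{\alpha/2}\bigr)\bigr]r^{-1}-F(2/\alpha,T^{-1})$. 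My strategy is then to freeze $G_2$ at the two endpoints. If $G_2(r,T)$ is monotonically decreasing in $r$ on $[R_b^2,R_c^2]$, then $H_{2u}\le 1+p_aG_2(r,T)\le H_{2l}$. Because the integrand $e^{-\lambda_b\pi r(1+p_aG_2(r,T))}$ is decreasing in $1+p_aG_2$ for $r\ge 0$, replacing $1+p_aG_2(r,T)$ by the larger constant $H_{2l}$ lower-bounds the integrand pointwise and by the smaller constant $H_{2u}$ upper-bounds it. Integrating the two resulting constant-coefficient exponentials over $[R_b^2,R_c^2]$ produces the $H_{2l}$-term of (\ref{equ:PcLB}) and the $H_{2u}$-term of (\ref{equ:PcUB}); adding the two exact outer terms then gives $P_c^{\text{LB}}\le P_c\le P_c^{\text{UB}}$.

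The hard part will be establishing the monotonicity of $G_2(r,T)$, because its derivative combines two hypergeometric terms pulling in opposite directions: as $r$ grows the argument $T^{-1}(R_c^2/r)^{\alpha/2}$ shrinks (raising the first $F$) while the prefactor $r^{-1}$ decays and $T(r/R_c^2)^{\alpha/2}$ grows (lowering the second $F$). I would make $\partial_r G_2$ tractable via the integral representation $F(b,z)={}_2F_1(1,b;1+b;-z)=b\int_0^1 t^{b-1}(1+zt)^{-1}\,\mathrm{d}t$, which converts the derivative into an integral whose sign can be read off directly; alternatively, the endpoint-extremality of $G_2$ can be argued from the monotone Laplace-transform structure of the interference underlying (\ref{equ:pcbdm}). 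Once the sign of $\partial_r G_2$ is pinned down, the remaining integrations are routine.
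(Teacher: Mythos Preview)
Your proposal is correct and follows essentially the same route as the paper: evaluate the first and third integrals exactly (since $rG_1$ and $rG_3$ are affine in $r$), then sandwich the middle integral by freezing $G_2$ at the endpoints of $[R_b^2,R_c^2]$ using its monotonicity in $r$. The paper's appendix simply asserts $G_2'(r)<0$ without further argument, so your plan to justify the monotonicity via the integral representation of ${}_2F_1$ actually goes beyond what the paper supplies.
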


Applying Proposition 2 in asymptotic regions leads to the following proposition. 
\begin{prop} (Asymptotic $\SIR$ coverage probability)
\label{prop:converge}
As $\lambda_b \rightarrow \infty$, $\SIR$ coverage probability $P_c(\lambda_b, \lambda_u, T)$ converges to a finite value as follows.
\begin{align}
\lim_{\lambda_b\rightarrow\infty} P_c(\lambda_b, \lambda_u, T) &=  e^{-\lambda_u \pi c_T(\alpha, \alpha_c, {R_c}, {R_b})}
\end{align} 
\end{prop}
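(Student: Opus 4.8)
The plan is to take the limit $\lambda_b \to \infty$ in the closed-form bounds of Proposition~\ref{prop:bounds} and then invoke the squeeze theorem. The crucial preliminary observation concerns the asymptotic behaviour of the active probability $p_a$: since $p_a \approx 1-(1+\lambda_u/(3.5\lambda_b))^{-3.5}$, a first-order expansion of $(1+x)^{-3.5}=1-3.5x+O(x^2)$ with $x=\lambda_u/(3.5\lambda_b)\to 0$ gives $p_a \to 0$ while the product satisfies $\lambda_b p_a \to \lambda_u$. Thus, as the network densifies, the active-BS density $\lambda_b^{*}=\lambda_b p_a$ saturates at the UE density $\lambda_u$, and each auxiliary quantity $H_1,H_{2l},H_{2u},H_3$, all of the form $1+p_a(\cdot)$ with a bounded multiplier, converges to $1$.

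First I would verify that the constant $c_T(\alpha,\alpha_c,R_c,R_b)$ and the values $G_2(R_b^2)$, $G_2(R_c^2)$, $G_3(T)$ entering the bounds are finite: each is a finite combination of the hypergeometric function $F$ evaluated at finite arguments, where $R_c \geq R_b > 0$ keeps every ratio $R_c/R_b$ bounded. Consequently the coefficients of $p_a$ in $H_1,H_{2l},H_{2u},H_3$ stay bounded and these four quantities indeed tend to $1$.

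Next I would pass to the limit term by term in $P_c^{\text{LB}}$ (\ref{equ:PcLB}) and $P_c^{\text{UB}}$ (\ref{equ:PcUB}). In the leading bracket the first exponent equals $\lambda_b p_a \pi c_T \to \lambda_u \pi c_T$, so this term tends to $e^{-\lambda_u \pi c_T}$, whereas the companion exponent $\lambda_b \pi(R_b^2 H_1 + p_a c_T)$ diverges because its bare-$\lambda_b$ part $\lambda_b \pi R_b^2 H_1$ blows up ($R_b>0$, $H_1\to 1$), forcing that exponential to $0$; dividing by $H_1 \to 1$, the bracket contributes exactly $e^{-\lambda_u \pi c_T}$. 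Every remaining exponential in both bounds has an exponent of the form $\lambda_b$ times a positive quantity converging to $\pi R_b^2$ or $\pi R_c^2$ (since $R_b,R_c>0$ and $H_{2l},H_{2u},H_3\to 1$), and therefore diverges; after division by the finite limits of $H_{2l},H_{2u},H_3$ these terms contribute $0$. Hence both $P_c^{\text{LB}}$ and $P_c^{\text{UB}}$ converge to the same value $e^{-\lambda_u \pi c_T(\alpha,\alpha_c,R_c,R_b)}$, and since $P_c^{\text{LB}} \leq P_c \leq P_c^{\text{UB}}$ for all $\lambda_b$, the squeeze theorem delivers the claim.

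The main obstacle I anticipate is the careful handling of the single surviving term $e^{-\lambda_b p_a \pi c_T}$: here $\lambda_b p_a$ is an indeterminate product of a diverging and a vanishing factor, and pinning down its exact limit $\lambda_u$ is the crux of the argument. It is precisely this cancellation, with the bounded dual-slope attenuation encoded in $c_T$ surviving while the active density saturates at $\lambda_u$, that distinguishes this finite, nonzero limit from the degenerate $0$ or $1$ limits of the single-effect models; I would therefore make the $(1+x)^{-3.5}$ expansion rigorous rather than merely heuristic.
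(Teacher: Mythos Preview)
Your proposal is correct and follows essentially the same route as the paper: take the limit $\lambda_b\to\infty$ in the closed-form bounds of Proposition~\ref{prop:bounds}, observe that every term except the first exponential in each bound vanishes, and conclude via the squeeze theorem. The paper's own proof is terser and simply asserts the surviving term equals $\frac{1}{H_1}e^{-\lambda_u\pi c_T}$ before noting $H_1\to 1$; your explicit justification that $\lambda_b p_a\to\lambda_u$ via the expansion of $(1+x)^{-3.5}$ is a step the paper leaves implicit but which, as you correctly identify, is the crux of the argument.
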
 
\begin{proof}
From (\ref{equ:PcLB}) and (\ref{equ:PcUB}) in Proposition \ref{prop:Pc}, we have
$\lim_{\lambda_b \to \infty} P_{c}^{\text{LB}} = \lim_{\lambda_b \to \infty} P_{c}^{\text{UB}} = \frac{1}{H_1}e^{-\lambda_u \pi c_T(\alpha, \alpha_c, {R_c}, {R_b})}$ since all the other terms tend to 0 as $\lambda_b \to \infty$. According to the Squeeze theorem, $\lim_{\lambda_b \to \infty} P_{c} = \frac{1}{H_1}e^{-\lambda_u \pi m}$. Meanwhile, $\lim_{\lambda_b \to \infty} H_1 = 1$ since $p_a \to 0$. Thus $\lim_{\lambda_b \to \infty} P_{c} = e^{-\lambda_u \pi c_T(\alpha, \alpha_c, {R_c}, {R_b})}$.
\end{proof}

Proposition \ref{prop:converge} emphasizes the importance of considering UE density in a UDN. The converged value is a decreasing function of $\lambda_u$ and it tends to 0 when $\lambda_u \to \infty$, i.e., in a fully loaded network. Thus, deploying infinite number of BSs will not bring the UE performance to a unprecedented level. In contrast, extreme densification will put the coverage probability into the danger of decreasing to 0, as shown in Fig. \ref{fig:PcUE} in the next section. The converged result also depends on environmental parameters $(\alpha, \alpha_c, {R_c}, {R_b})$. It will increase as both path loss exponents grow since the coverage probability is a increasing function of path loss exponents \cite{Zhang}. A larger $R_b$ or $R_c$ will decline the performance since it either reduce the signal power or amplify the interference in the \emph{near-field}.

We now turn to the network perspective and study the asymptotic behavior of the ASE. Combining the definition in (\ref{equ:ASE}) with Proposition \ref{prop:converge}, we can easily obtain the following proposition.
\begin{prop} (Asymptotic $\ASE$)  As $\lambda_b \rightarrow \infty$, $\ASE$ $\Gamma$ converges to a finite value as follows.

\vspace{-10pt}\small \begin{align}
\lim_{\lambda_b \rightarrow \infty} \Gamma\(\lambda_b, \lambda_u, T\) &= \lambda_u  e^{-\lambda_u \pi c_T(\alpha, \alpha_c, {R_c}, {R_b})}\log_2(1 + T)
\end{align}\normalsize 
\label{prop:ASE}
\end{prop}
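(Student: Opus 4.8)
The plan is to start from the definition of the ASE in~(\ref{equ:ASE}), $\Gamma(\lambda_b,\lambda_u,T)=p_a\lambda_b\,P_c(\lambda_b,\lambda_u,T)\log_2(1+T)$, and to factor the asymptotic limit as a product of three independent pieces. The factor $\log_2(1+T)$ is constant in $\lambda_b$ and passes through the limit unchanged. The coverage-probability factor is already supplied by Proposition~\ref{prop:converge}, which gives $\lim_{\lambda_b\to\infty}P_c=e^{-\lambda_u\pi c_T(\alpha,\alpha_c,{R_c},{R_b})}$. Consequently the entire computation reduces to evaluating the one remaining limit $\lim_{\lambda_b\to\infty}p_a\lambda_b$, after which I would recombine the three factors via the product rule for limits.

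To evaluate $\lim_{\lambda_b\to\infty}p_a\lambda_b$, I would substitute the active-probability expression $p_a=1-\bigl(1+\tfrac{\lambda_u}{3.5\lambda_b}\bigr)^{-3.5}$ and write $x=\tfrac{\lambda_u}{3.5\lambda_b}$, so that $x\to 0$ as $\lambda_b\to\infty$. A first-order binomial expansion $(1+x)^{-3.5}=1-3.5\,x+O(x^2)$ then gives $p_a=3.5\,x+O(x^2)=\tfrac{\lambda_u}{\lambda_b}+O(\lambda_b^{-2})$, so that $p_a\lambda_b=\lambda_u+O(\lambda_b^{-1})\to\lambda_u$. This is the step that injects the nontrivial prefactor $\lambda_u$ into the final expression, and it is essentially the analytic reason why full densification does not drive the ASE to zero even though $p_a$ itself vanishes.

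Combining the three limits by the product rule yields $\lim_{\lambda_b\to\infty}\Gamma=\lambda_u\,e^{-\lambda_u\pi c_T(\alpha,\alpha_c,{R_c},{R_b})}\log_2(1+T)$, which is the claimed identity. The main point requiring care is the legitimacy of splitting the limit into a product of factors: since each factor converges to a finite limit on its own, no indeterminate $0\cdot\infty$ form survives the competition between $p_a\to 0$ and $\lambda_b\to\infty$, and the product rule applies cleanly. I do not anticipate any genuine obstacle here; the only modelling caveat is that $p_a$ is itself an approximation, so the conclusion holds to the same order of accuracy as that approximation, but within the paper's assumptions the leading behaviour $p_a\sim\lambda_u/\lambda_b$ is exactly what governs the limit.
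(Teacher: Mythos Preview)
Your proposal is correct and follows essentially the same approach as the paper, which simply states that the result follows by combining the definition~(\ref{equ:ASE}) with Proposition~\ref{prop:converge}. In fact, you supply more detail than the paper does: the key computation $\lim_{\lambda_b\to\infty}p_a\lambda_b=\lambda_u$ via the expansion of $p_a$ is left entirely implicit in the text, and your explicit verification of it (together with the observation that the product rule applies because all three factors have finite limits) is exactly what is needed to turn the paper's one-line remark into a complete argument.
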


Proposition \ref{prop:ASE} shows that the asymptotic ASE will increase with $\lambda_u$ when $\lambda_u$ is small and tends to 0 as $\lambda_u \to \infty$. Unlike the coverage probability, the asymptotic ASE will be beneficial from ultra-desification to some extent and but still highly depends on the UE density. 

Returning from the asymptotic regions, we demonstrate how the ASE scales with BS density in the next proposition.
\begin{prop} \label{prop:trend}($\ASE$ scaling)
The $\ASE$ scales with $\lambda_b p_a e^{-\lambda_b p_a \pi c_T(\alpha, \alpha_c, {R_c}, {R_b})}$ and is bounded by
\begin{align}
\Gamma^{\mathrm{LB}} = \lambda_b p_a P_{c}^{\text{LB}} \log_2(1+T)
\end{align}
\begin{align}
\Gamma^{\mathrm{UB}} = \lambda_b p_a P_{c}^{\text{UB}} \log_2(1+T).
\end{align}
\end{prop}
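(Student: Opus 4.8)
The plan is to treat the two assertions of the proposition separately: the two-sided bound follows almost directly from Proposition~\ref{prop:bounds}, whereas the scaling statement requires isolating the leading term of $P_c$.

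First I would dispose of the bounds. From the definition (\ref{equ:ASE}), $\Gamma=\lambda_b p_a P_c\log_2(1+T)$, and the prefactor $\lambda_b p_a\log_2(1+T)$ is nonnegative. Proposition~\ref{prop:bounds} gives $P_c^{\text{LB}}\le P_c\le P_c^{\text{UB}}$, so multiplying this chain through by $\lambda_b p_a\log_2(1+T)$ immediately yields $\Gamma^{\mathrm{LB}}\le\Gamma\le\Gamma^{\mathrm{UB}}$ with $\Gamma^{\mathrm{LB}}$ and $\Gamma^{\mathrm{UB}}$ exactly as written. This step is routine.

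Next I would justify the scaling by identifying which summand of $P_c^{\text{LB}}$ (equivalently $P_c^{\text{UB}}$) controls $P_c$ as $\lambda_b$ grows. The first summand carries the factor $\frac{1}{H_1}e^{-\lambda_b p_a\pi c_T(\alpha,\alpha_c,{R_c},{R_b})}$, whose exponent $-\lambda_b p_a\pi c_T$ stays bounded: expanding $p_a\approx 1-(1+\lambda_u/(3.5\lambda_b))^{-3.5}$ to first order gives $\lambda_b p_a\to\lambda_u$, so this term tends to the nonzero constant $e^{-\lambda_u\pi c_T}$. Every other exponential in $P_c^{\text{LB}}$ and $P_c^{\text{UB}}$ instead carries an exponent proportional to $-\lambda_b\pi R_b^2$ or $-\lambda_b\pi R_c^2$ (with factors $H_1,H_{2l},H_{2u},H_3$); since these factors all approach $1$ as $p_a\to 0$, each such exponent diverges to $-\infty$ and the corresponding term vanishes. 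Thus the dominant contribution to $P_c$ is $\frac{1}{H_1}e^{-\lambda_b p_a\pi c_T}$, which, because $H_1\to 1$, is asymptotically $e^{-\lambda_b p_a\pi c_T}$. Substituting into $\Gamma=\lambda_b p_a P_c\log_2(1+T)$ then shows $\Gamma\sim\lambda_b p_a e^{-\lambda_b p_a\pi c_T}\log_2(1+T)$, which is the claimed scaling.

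The hard part will be making this dominance argument quantitative rather than merely pointwise. Specifically, I would need to confirm $H_{2l},H_{2u},H_3>0$ so that those summands genuinely decay, and to control the rate $p_a\sim\lambda_u/\lambda_b$ carefully enough that $\lambda_b p_a$ stays bounded while the competing exponents $\lambda_b\pi R_b^2 H_{2l}$ and $\lambda_b\pi R_c^2 H_3$ diverge. Once these rate comparisons are secured, the fact that $\Gamma^{\mathrm{LB}}$ and $\Gamma^{\mathrm{UB}}$ share the same leading term squeezes $\Gamma$ onto the stated scaling.
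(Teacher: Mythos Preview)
Your proposal is correct and follows essentially the same route as the paper: the bounds are obtained by multiplying Proposition~\ref{prop:bounds} through by $\lambda_b p_a \log_2(1+T)$, and the scaling claim is established by showing that the summand $\frac{1}{H_1}e^{-\lambda_b p_a\pi c_T}$ dominates all the others because its exponent stays bounded (via $\lambda_b p_a\to\lambda_u$) while the remaining exponents diverge to $-\infty$. The paper formalizes ``scales with'' as a two-sided big-$\mathcal{O}$ statement, proving $\Gamma^{\mathrm{UB}}=\mathcal{O}(\lambda_b p_a e^{-\lambda_b p_a\pi c_T})$ and $\lambda_b p_a e^{-\lambda_b p_a\pi c_T}=\mathcal{O}(\Gamma^{\mathrm{LB}})$, which is exactly the quantitative dominance comparison you flag in your final paragraph.
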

\begin{proof}
See Appendix \ref{sec:P3}.
\end{proof}

Similar with its asymptotic behavior, the ASE will increase with BS density when $\lambda_b$ is small and finally converge to $\lambda_u e^{-\lambda_u \pi c_T} \log_2(1+T)$ as shown in Proposition \ref{prop:ASE}.

\begin{figure}[t]
\includegraphics[width=.5\textwidth]{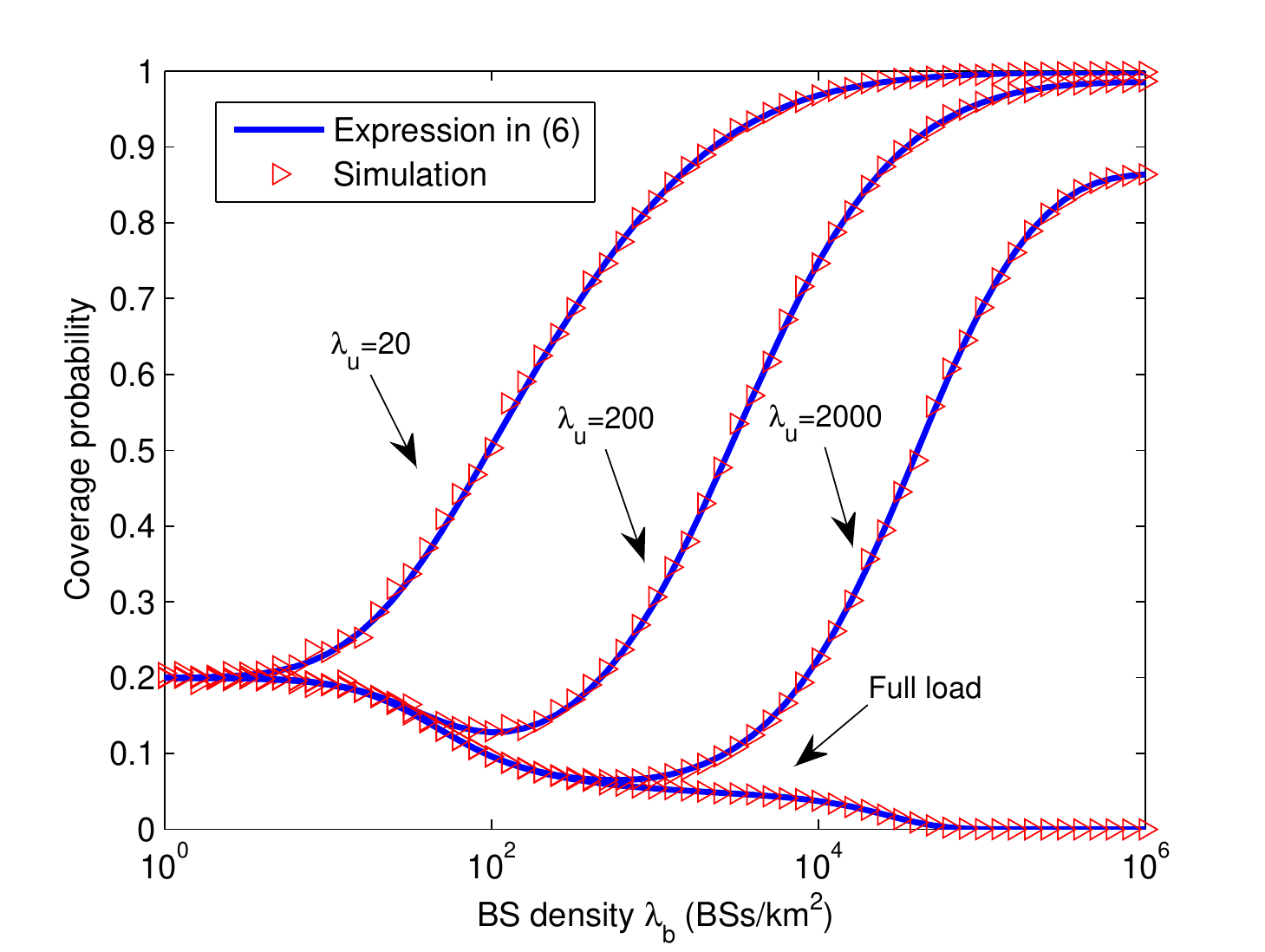}
\caption{Effect of $\lambda_u$ on coverage probability under bounded dual-slope model.}
\label{fig:PcUE}
\end{figure}


\section{Numerical Results}
In this section, we present the numerical results to study the performance of network densification and validate our theoretical analysis. We assume ${R_b}=1m$, ${R_c}=70m$, $\alpha_c=2.5$, $\alpha=4$ and set the SIR threshold T = 10dB in all of our results. To calculate or simulate `fully loaded network', we set $\lambda_u=2\times10^8/\text{km}^2$ which is a sufficiently large value so that $p_a \approx 1$. 
\subsection{Effect of UE density}
Fig. \ref{fig:PcUE} shows the effect of UE density on coverage probability. An exact match between simulation and analysis is observed. Meanwhile, we find that coverage probabilities show completely different trends among different UE densities.  

\begin{figure}[t]
  \centering
  \begin{subfigure}[a]{0.5\textwidth}
    \centering\includegraphics[width=0.9\textwidth]{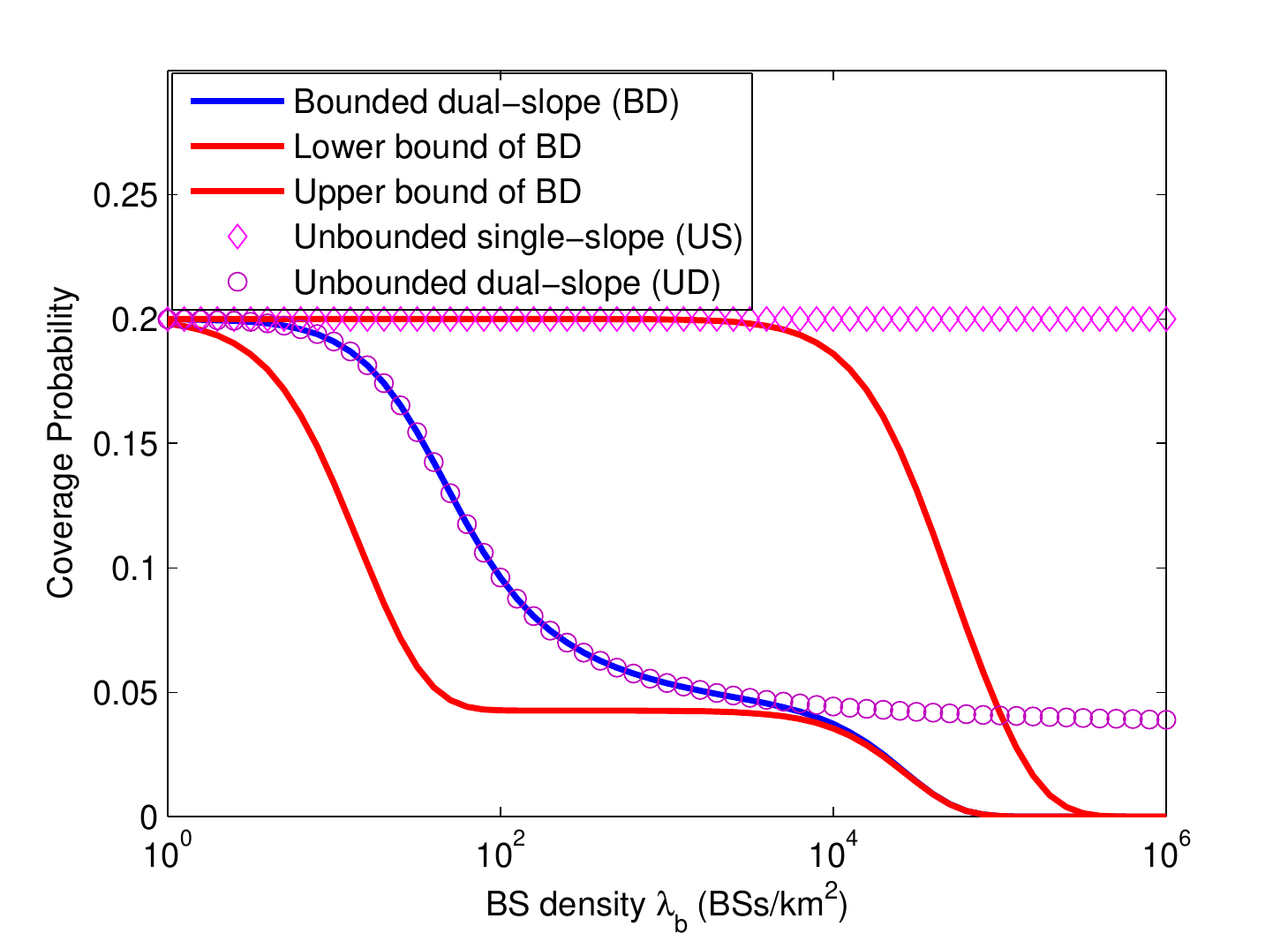}
    \caption{Coverage Probability when network is fully loaded}
    \label{fig:PcInf}
  \end{subfigure}\\
  \begin{subfigure}[a]{0.5\textwidth}
    \centering\includegraphics[width=0.9\textwidth]{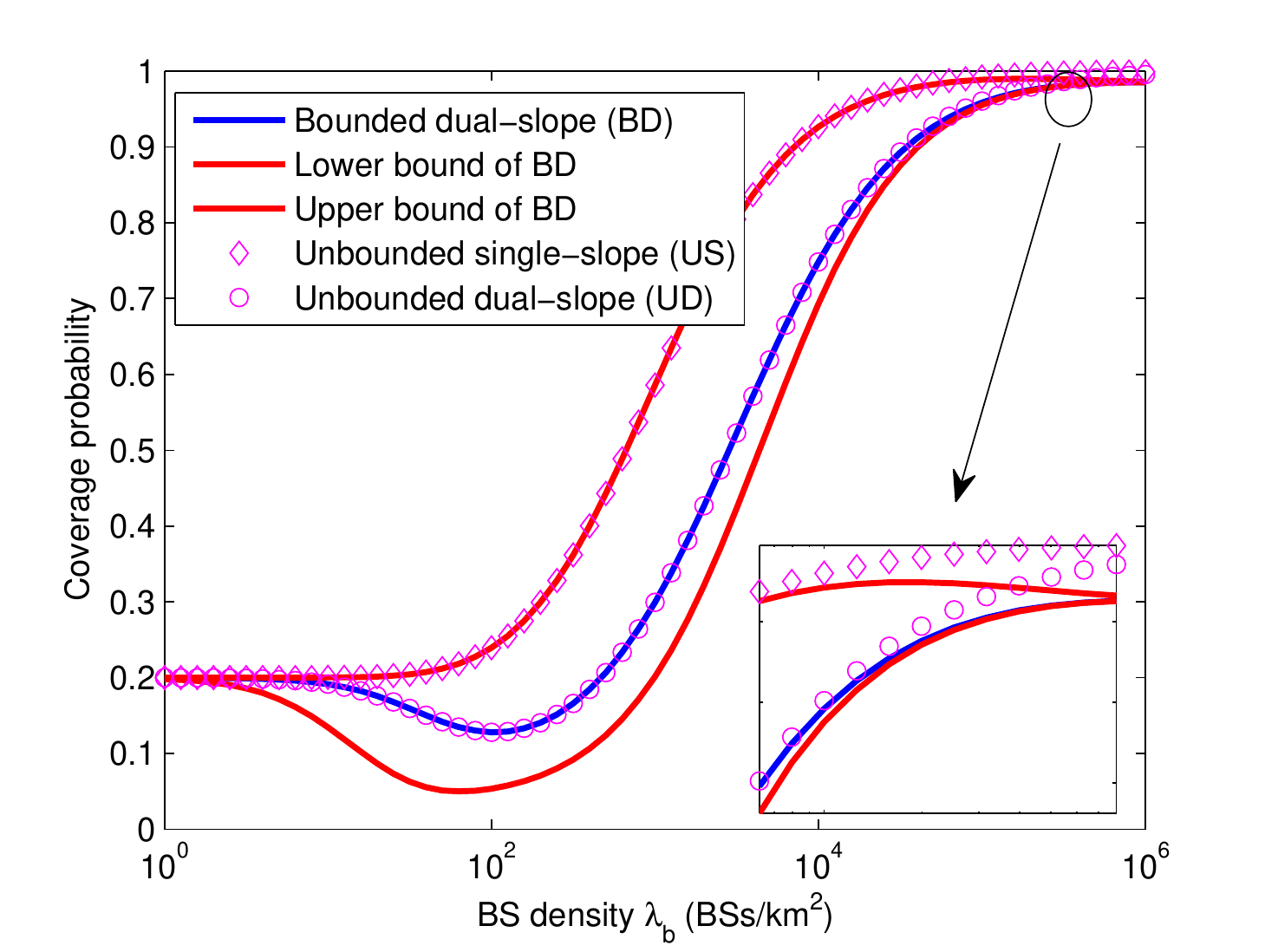}
    \caption{Coverage Probability when $\lambda_u = 200$}
    \label{fig:Pc200}
  \end{subfigure}
  \caption{Coverage probability bounds and comparison with previous models}
   \label{fig:Pc}   
\end{figure}

In full load model, the diminishing of coverage probability starts when interfering BSs fall into the \emph{near-field} of the typical UE and keeps decreasing since interference will continue increasing. When $\lambda_u$ is finite, the interferer coordinates converge to UE coordinates in a UDN regime \cite{Park2}. Thus the distance from the typical UE to its closest interferer can be approximated as the distance to the its closest neighbor UE , which has an expected value of $\frac{1}{2\sqrt{\lambda_u}}$. When UE density is low (e.g. $\lambda_u=20$), the expected value is larger than the critical distance, which means the probability of no interferer inside the \emph{near-field} of the typical UE is very high. Hence, the coverage probability is a non-decreasing function of $\lambda_b$ as in a single-slope model. In contrast, higher UE density (e.g. $\lambda_u=200$ or $2000$) leads to more potential interferers within critical distance. Thus coverage probability will decrease for the same reason as in fully loaded network. Nevertheless, when all the UEs in the near-field get service, coverage probability will start increasing again since the interference are saturated and no longer increase. Therefore, it is important to estimate the active UE density for efficient network deployment or operation in order to avoid the decreasing region of coverage probability.

\subsection{Coverage probability analysis}

In Fig. \ref{fig:Pc}, we compare the coverage probability under our model with the previous models. We observe that the performance is overestimated with unbounded or single-slope models in highly densified regions. The reason is that those models either exaggerate the received power inside the bounded region or underestimate the interference in the near-field. 

The inaccuracy of path loss models may mislead the prediction of asymptotic behavior. For instance, the coverage probability will converge to 1 with unbounded models but to $e^{-\lambda_u \pi c_T(\alpha, \alpha_c, {R_c}, {R_b})}$ which is smaller than 1 (assume $\lambda_u > 0$) when applying a bounded model. Consistent with the result in Proposition \ref{prop:converge}, the converged value will decrease as UE density grows and finally falls to zero in the full load case as shown in Fig. \ref{fig:PcUE}. This is because the signal is limited by the bound effect and the overall performance will be dominated by the interference which depends on UE density.

\begin{figure}[t]
  \centering
  \begin{subfigure}[a]{0.5\textwidth}
    \centering\includegraphics[width=0.9\textwidth]{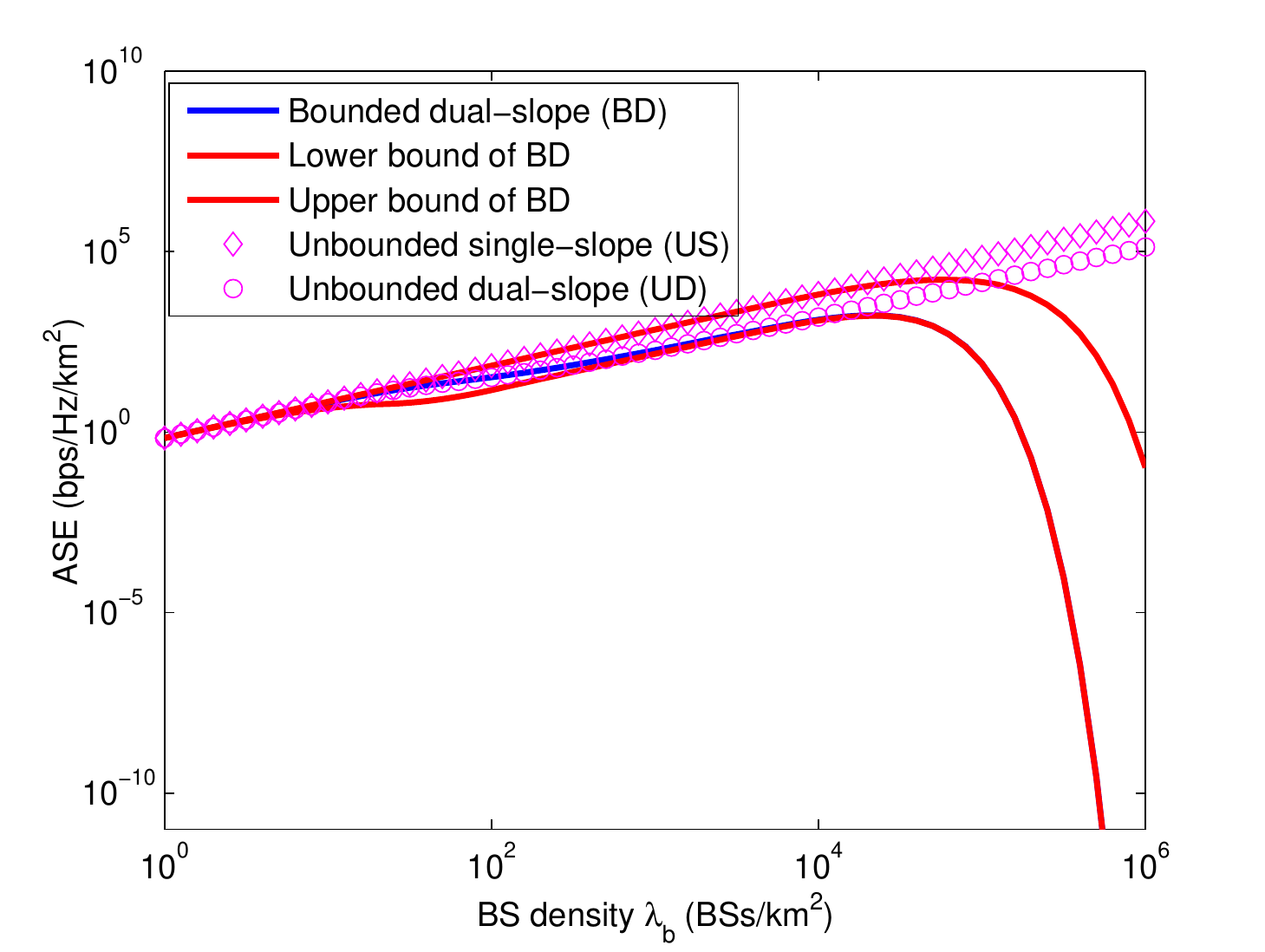}
    \caption{ASE when network is fully loaded}
    \label{fig:ASEInf}
  \end{subfigure}\\
  \begin{subfigure}[a]{0.5\textwidth}
    \centering\includegraphics[width=0.9\textwidth]{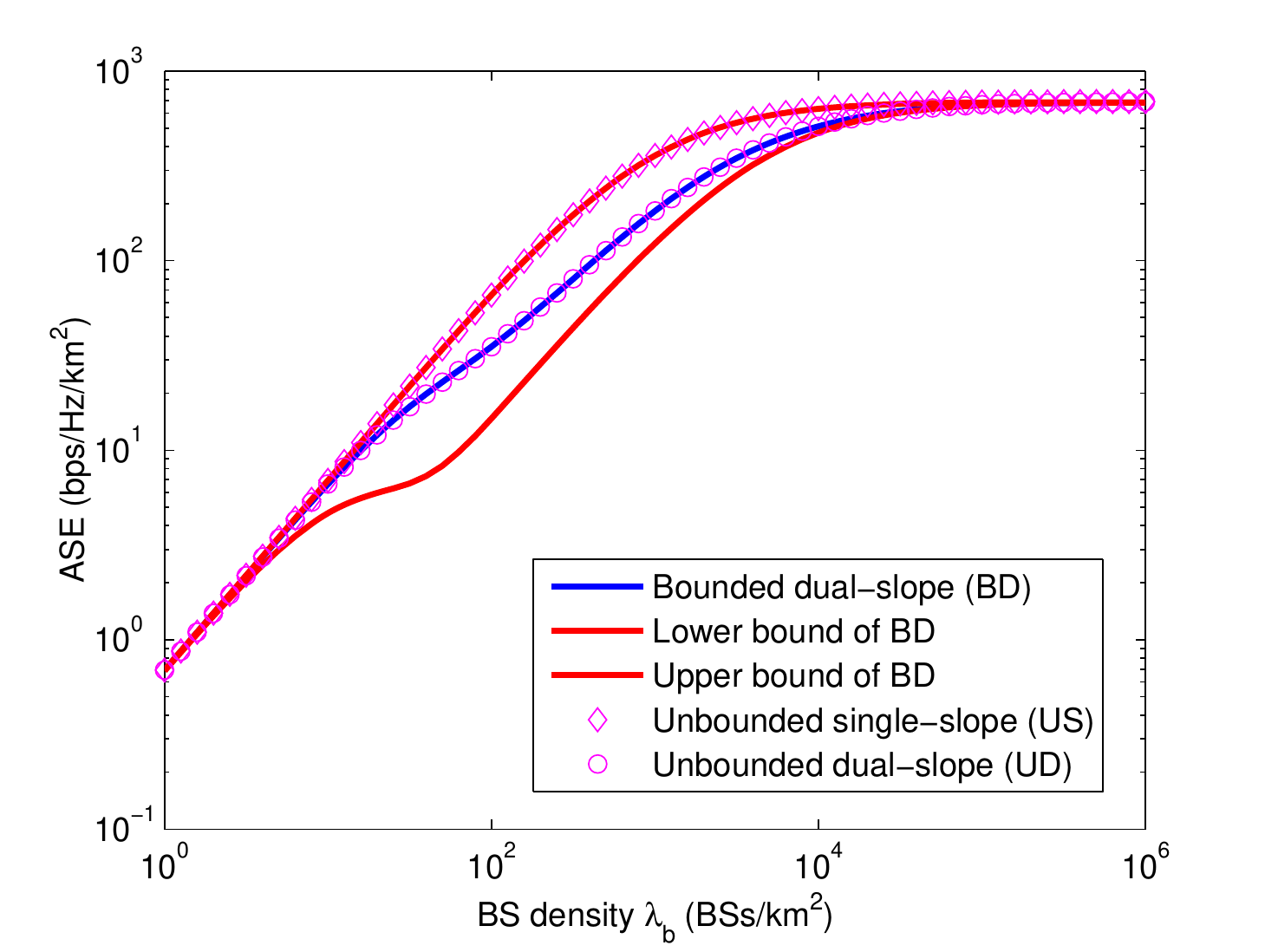}
    \caption{ASE when $\lambda_u = 200$}
    \label{fig:ASE200}
  \end{subfigure}
  \caption{ASE bounds and comparison with previous models.}
   \label{fig:ASE}   
\end{figure}

\subsection{ASE analysis}
Figure \ref{fig:ASE} depicts the scaling of ASE with regard to BS density. Aligning with proposition \ref{prop:trend}, the ASE first increase with BS density and then converge to a constant. The constant is larger than 0 in partially loaded network and decreases to 0 when the network is full load ($\lambda_u \to \infty$) as proved in Proposition \ref{prop:ASE}. 

By comparing Fig. \ref{fig:ASE} with Fig. \ref{fig:Pc}, we can observe a trade-off between UE and network performance during BS densification. In full load case, there exists an BS density threshold around $\lambda_b = 10^4$. Before the threshold, although individual performance gets worse, densification is still beneficial from the network perspective. For partially loaded network, the trade-off appears approximately between $\lambda_b = 10^{1.5}$ to $\lambda_b = 10^3$. The phenomenon further demonstrates the necessity of applying a dual-slope model because the coverage probability is a non-decreasing function of BS density in a single-slope model.

Our bounds in Proposition \ref{prop:bounds} and \ref{prop:trend} are compared with the integral expression and shown in Fig. \ref{fig:Pc} and \ref{fig:ASE}. The figures verify the asymptotic tightness of the bounds for $\lambda \rightarrow \infty$. The upper bound is tighter in small UE density scenarios while the lower bound fits better for large UE densities. This is because the upper bound is close to single-slope model which is similar with small UE density scenario. The bounds can be used as approximations in large BS density regions.

\section{Conclusion}
\label{sec:Con}
In this paper, we investigate the asymptotic behavior of ultra-densification of base stations. To our best knowledge, this is the first work incorporating two key aspects of UDN modeling: a partially loaded network due to a finite active UE density and a dual-slope path loss model with a bounded loss within a unit distance. With such models, we find that the asymptotic behavior of ultra-dense base station deployment is different from what was known with simpler assumptions, e.g. unit or zero convergence of coverage probability. Depending on the UE density, both UE coverage probability and ASE converge to either zero or a constant value. Even before the asymptotic regions, our results suggest that the densification cannot always improve the individual UE performance or boost the network throughput as well. The increment are prevented by introducing extra interference in the near-field until all the UEs in the near-field are served. Our work provides insights into the scaling of the network densification, and thus gives a guideline for the network deployment. 





%
\appendices
\section{Proof of Proposition 1}
\label{sec:P1}
We start from the coverage probability expression under general path loss model and then plug in with our bounded dual-slope model. According to the definition, the coverage probability can be expressed as:
\begin{align}
\begin{split}
\begin{aligned}
\label{equ:PcO}
P_{c}^{l}&=\mathbb{P}[\mathrm{SIR}>T]=\mathbb{P}\[\frac{|h|^2\ell(r)}{I}>T\]\\
&\stackrel{(a)}{=}\int_{r>0}\mathbb{P}\[|h|^2>\frac{TI}{\ell(r)}|r\]f_r(r) \mathrm{d}r\\
&\stackrel{(b)}{=}\int_{r>0}\mathcal{L}_{I}\(\frac{T}{\ell(r)}\)f_r(r)\mathrm{d}r
\end{aligned}
\end{split}
\end{align}
where (a) follows from BS distribution and (b) is due to the fact that $|h|^2 \sim \mathrm{exp}(1)$, $\mathcal{L}_{I}(s)$ is the Laplace transform of interference which can be derived as
\begin{align}
\begin{split}
\begin{aligned}
\mathcal{L}_{I}(s)&=\mathbb{E}_{I}[e^{-sI}]=\mathbb{E}_{\Phi_{b}^{*},g_{i}}[exp(-s(\sum_{x\in\Phi_{b}^{*}}g_{i}\ell(d_{i})))]\\
&\stackrel{(a)}{=}\mathbb{E}_{\Phi_{b}^{*}}\[\prod_{x\in\Phi_{b}^{*}}\frac{1}{1+s\ell(d_i)}\]\\
&\stackrel{(b)}{=}exp\(-2\pi\lambda_{b}^{*}\int_{r}^{\infty}\(1-\frac{1}{1+s\ell(v)}\)\mathrm{d}v\)
\end{aligned}
\end{split}
\end{align}
where (a) is because $g \sim \mathrm{exp}(1)$ and (b) follows the probability generating functional (PGFL) of the PPP. Plugging in $s=\(\frac{T}{\ell(r)}\)$ and employing a change of variables $v=\sqrt{t}r$ results in
\begin{align}
\label{equ:LIT}
\mathcal{L}_{I}\(\frac{T}{\ell(r)}\)=exp\(-2\pi\lambda_{b}P_a\int_{1}^{\infty}\(\frac{T}{T+\frac{\ell(r)}{\ell(\sqrt{t}r)}}\)\mathrm{d}t\).
\end{align}
Plugging (\ref{equ:LIT}) into (\ref{equ:PcO}) with $z \to r^2$ gives the coverage probability under a general path loss fucntion in (\ref{equ:PcL}) as:

{\footnotesize
\begin{align}
\hspace{-10pt}\label{equ:PcL}
P_{c}^{l}(\lambda_b,\lambda_u,T) =
\lambda_b \pi \int_0^{\infty}\exp\(-\lambda_{b}\pi z\[1+p_{a}\int_1^{\infty}\frac{1}{1+\frac{\ell(\sqrt{z})}{T\ell(\sqrt{tz})}}\mathrm{d}t\]\)\mathrm{d}z
\end{align}
}
Based on (\ref{equ:PcL}), we can substitute our bounded dual-slope model (\ref{equ:pathloss}) into it and get the expression in (\ref{equ:pcbdm}). 
\section{Proof of Proposition 2}
\label{sec:P2}
According to the expression of (\ref{equ:G1}) and (\ref{equ:G3}), $rG_1(r,T)$ and $rG_3(T)$ are linear functions of $r$. Thus we can rewrite the first and third integral in (\ref{equ:pcbdm}) as follows:
{\small
\begin{align}
\int_0^{{R_{b}}^2}e^{-\lambda_{b}\pi r(1+p_{a}G_1(r))} \mathrm{d}r = \frac{1}{H_1}\(e^{-\lambda_b  p_{a} \pi c_T} - e^{-\lambda_b \pi \({R_b}^2 H_1  + p_{a}c_T\)}\)
\end{align}
\begin{align}
 \int_{{R_{c}}^2}^{\infty}e^{-\lambda_{b}\pi r(1+p_{a}G_3(r))} \mathrm{d}r = \frac{1}{H_3}\(e^{-\lambda_b \pi  {R_c}^2 H_3 }\).
\end{align}
}
In the second integral, from $G_2'(r)<0$ we can get $G_2({R_b}^2) \geq G(r) \geq G_2({R_c}^2)$. With the inequality, we can provide bounds for the second integral as:
{\small
\begin{align}
\begin{split}
\begin{aligned}
\label{equ:G2u}
\int_{{R_{b}}^2}^{{R_{c}}^2} e^{-\lambda_{b}\pi r(1+p_{a}G_2(r))} \mathrm{d}r & \leq \int_{{R_{b}}^2}^{{R_{c}}^2} e^{-\lambda_{b}\pi r(1+p_{a}G_2({R_b}^2))} \mathrm{d}r \\
&=\frac{1}{H_{2u}}(e^{-\lambda_b \pi H_{2u} {R_b}^2} - e^{-\lambda_b \pi H_{2u} {R_c}^2})\\
\end{aligned}
\end{split}
\end{align}
\begin{align}
\begin{split}
\begin{aligned}
\label{equ:G2l}
\int_{{R_{b}}^2}^{{R_{c}}^2} e^{-\lambda_{b}\pi r(1+p_{a}G_2(r))} \mathrm{d}r &\geq \int_{{R_{b}}^2}^{{R_{c}}^2} e^{-\lambda_{b}\pi r(1+p_{a}G_2({R_c}^2))} \mathrm{d}r \\
&=\frac{1}{H_{2l}}(e^{-\lambda_b \pi H_{2l} {R_b}^2} - e^{-\lambda_b \pi H_{2l} {R_c}^2}).
\end{aligned}
\end{split}
\end{align}
}
Replacing the integrals in (\ref{equ:pcbdm}) with the exponential expressions above completes the proof.

\section{Proof of Proposition 3}
\label{sec:P3}
\textbf{Notation}: Let $f$ and $g$ be two functions defined on some subset of the real numbers. One writes $f(x) = \mathcal{O}(g(x))$ if and only if there exists a positive real number $M$ and a real number $x_0$ such that $f(x) \leq Mg(x)$ for all $x \geq x_0$.

We omit the proof of the bounds since they come directly from Proposition \ref{prop:bounds}. To prove the ASE scales with $\lambda_b^* e^{-\lambda_b^* \pi c_T}$ is equivalent with showing $\Gamma^{\mathrm{UB}} = \mathcal{O}(\lambda_b^* e^{-\lambda_b^* \pi c_T})$ and $\lambda_b^* e^{-\lambda_b^* \pi c_T} = \mathcal{O}(\Gamma^{\mathrm{LB}})$. Denote $\log_2(1+T)$ as $\tau$ and from (\ref{equ:PcUB}) we have:
{\small
\begin{align}
\Gamma^{\mathrm{UB}} \leq \lambda_b^* (\frac{1}{H_1} e^{-\lambda_b p_{a} \pi c_T} + \frac{1}{H_{2u}}e^{-\lambda_b \pi H_{2u} {R_b}^2} + \frac{1}{H_3}e^{-\lambda_b \pi H_3 {R_c}^2}) \tau.
\end{align}
}
Then we can show $\exists \lambda_1>0, \forall \lambda_b> \lambda_1,  \frac{1}{H_1} e^{-\lambda_b p_{a} \pi c_T} >  \frac{1}{H_{2u}}e^{-\lambda_b \pi H_{2u} {R_b}^2}$ and $\frac{1}{H_1} e^{-\lambda_b p_{a} \pi c_T} > \frac{1}{H_3}e^{-\lambda_b \pi H_3 {R_c}^2}$ since $\frac{1}{H_1} e^{-\lambda_b p_{a} \pi c_T} \rightarrow \frac{1}{H_1} e^{-\lambda_u \pi c_T}$ and the other two parts $\rightarrow$ 0 as $\lambda_b \rightarrow \infty$. Thus $\exists \lambda_1>0, \forall \lambda_b> \lambda_1, \Gamma^{\mathrm{UB}} \leq \frac{3}{H_1} \lambda_b^* e^{- \lambda_b^* \pi c_T} \implies \Gamma^{\mathrm{UB}} = \mathcal{O}(\lambda_b^* e^{-\lambda_b^* \pi c_T})$.

For $\Gamma^{\mathrm{LB}}$, from (\ref{equ:PcLB}) we have
\begin{align}
\Gamma^{\mathrm{LB}} \geq \frac{1}{k_1} \lambda_b^* \(e^{-\lambda_b p_{a} \pi c_T} - e^{-\lambda_b \pi (H_1 {R_b}^2 + p_{a}c_T)}\) \tau,
\end{align}
which can be rephrased as:
\begin{align}
\lambda_b^* e^{-\lambda_b p_{a} \pi c_T} \leq  \frac{H_1}{(1 - e^{-\lambda_b \pi H_1 {R_b}^2}) \tau} \Gamma^{\mathrm{LB}}.
\end{align}
Thus, $\exists \lambda_2>0, k>0, \forall \lambda_b > \lambda_2, e^{-\lambda_b \pi H_1 {R_b}^2} < k$ thus $\lambda_b^* e^{-\lambda_b p_{a} \pi c_T} \leq  \frac{H_1}{(1 - k) \tau} \Gamma^{\mathrm{LB}} $. Therefore $\lambda_b^* e^{-\lambda_b^* \pi c_T} = \mathcal{O}(\Gamma^{\mathrm{LB}})$ and we complete the proof.
\ifCLASSOPTIONcaptionsoff
  \newpage
\fi

\section*{Acknowledgment}
Part of this work has been supported by the H2020 project METIS-II co-funded by the EU. The views expressed are those of the authors and do not necessarily represent the project. The consortium is not liable for any use that may be made of any of the information contained therein.
  
\bibliographystyle{IEEEtran}
\bibliography{IEEEabrv,bibl}

\end{document}